\documentclass[12pt]{article}
\usepackage[margin=1.25in]{geometry}
\usepackage{amsmath,amssymb,amsthm,mathrsfs}
\usepackage{enumitem}
\usepackage{setspace}
\usepackage[authoryear]{natbib}
\usepackage{bbm}
\usepackage[colorlinks,citecolor=blue,linkcolor=blue,urlcolor=blue]{hyperref}
\allowdisplaybreaks
\theoremstyle{plain}
\newtheorem{thm}{Theorem}
\newtheorem{lem}{Lemma}
\newtheorem{prop}{Proposition}
\newtheorem{cor}{Corollary}
\theoremstyle{definition}
\newtheorem{defn}{Definition}
\newtheorem{example}{Example}
\newtheorem{remark}{Remark}

\newcommand{\E}{\mathbb{E}}
\newcommand{\R}{\mathbb{R}}
\newcommand{\PP}{\mathbb{P}}
\newcommand{\cU}{\mathcal{U}}
\newcommand{\cX}{\mathcal{X}}
\newcommand{\ind}{\mathbbm{1}}

\newcommand{\norm}[1]{\lVert #1 \rVert}

\title{When ``Normalization Without Loss of Generality'' Loses Generality\footnote{I thank Xiaohong Chen, Jiafeng Chen, Phil Haile, Soonwoo Kwon, Whitney Newey, Peter Phillips, and Elie Tamer for helpful discussions and suggestions. Part of this paper builds on an unpublished appendix to my working paper \cite{gao2017arXiv} on network formation, the main results of which were published as \citet{gao2020nonparametric}. The present paper develops  a stand-alone framework and substantially extends the analysis. AI tools (Claude, ChatGPT, Gemini, and Refine.ink) were used for research assistance and critical reviews; the author assumes full responsibility for any remaining errors.}}
\author{Wayne Yuan Gao\thanks{Gao: Department of Economics, University of Pennsylvania, 133 South 36th Street, Philadelphia, PA 19104, waynegao@upenn.edu.}}
\begin{document}
	\maketitle
	
	\begin{abstract}
		\noindent Normalization is ubiquitous in economics, and a growing literature shows that ``normalizations'' can matter for interpretation, counterfactual analysis, misspecification, and inference. This paper provides a general framework for these issues, based on the formalized notion of \emph{modeling equivalence} that partitions the space of unknowns into equivalence classes, and defines normalization as a WLOG selection of one representative from each class. A counterfactual parameter is normalization-free if and only if it is constant on equivalence classes; otherwise any point identification is created by the normalization rather than by the model. Applications to discrete choice, demand estimation, and network formation illustrate the insights made explicit through this criterion. We then study two further sources of fragility: an extension trilemma establishes that fidelity, invariance, and regularity cannot simultaneously hold at a boundary singularity, while a normalization can itself introduce a coordinate singularity that distorts the topological and metric structures of the parameter space, with consequences for estimation and inference.
		
		\medskip
		\noindent\textbf{Keywords:} identification, normalization, without loss of generality, equivalence class, counterfactual, discrete choice
		
		%(JEL C14, C21, C25, C51)
	\end{abstract}
	
	\bigskip
	
	\section{Introduction}
	\label{sec:intro}
	
	Normalization is standard practice in econometrics and applied economics. In binary choice models, the error variance or the norm of a structural parameter is set to unity. In demand estimation, the idiosyncratic taste shock follows a Type-I extreme value distribution, with the latent utility of the outside option often set to zero. In network formation models, parameter norms or quantiles of the unobserved shock are fixed. In entry games, the variance of the private shock is set to one.
	In each case, certain unknowns, typically the location and scale of an unobservable, are not separately identified from the structural parameters, and a normalization pins them down.
	
	Normalization is not merely a notational convenience. A substantial literature now shows, in different ways, that restrictions described as normalizations can affect interpretation, counterfactuals, misspecification, and the geometry of inference. Point identification results are still routinely stated ``under normalization,'' and the implications of that qualifier for counterfactual analysis are often left implicit. This creates two recurring problems for applied work.
	First, the distinction between ``identified'' and ``identified under normalization'' is easily blurred: once a normalization has been imposed, the resulting parameter values are treated as uniquely identified structural objects, when in fact they are only one labeling of an equivalence class. Second, counterfactual parameters such as welfare changes or percentage welfare changes may or may not be invariant to the normalization, and when they are not, any apparent point identification is normalization-relative rather than model-driven. The difficulty is not confined to one class of models, but it is usually diagnosed case by case.
	
	An important clarification of scope is in order. This paper does not take a stand on which models, maintained assumptions, or objects of interest a researcher should adopt. Researchers are free to impose whatever substantive assumptions their application demands, and to defend those assumptions to their respective audiences. Our point is narrower and conditional: \emph{given} a researcher's chosen model---with all its structural equations and maintained assumptions held fixed---the paper provides a framework for diagnosing normalization issues and their downstream consequences explicitly. The ``safest'' counterfactuals are those written directly on observable quantities (choice probabilities, market shares, linking probabilities), which are automatically normalization-free. When this is not feasible, as in many structural settings where the objects of interest involve latent quantities, the researcher should at minimum verify that each target counterfactual is measurable with respect to the equivalence class induced by the model's normalization, and make the equivalence class structure explicit.

	Percentage changes can be particularly dangerous, because when the normalization involves a location shift, the denominator of the percentage is itself normalization-dependent. A simple daily-life analogy makes this concrete: consider a warming of $10^\circ$C in physical temperature from $1^\circ$C to $11^\circ$C across two days, and the same physical temperature change simultaneously corresponds to a $1{,}000\%$ increase in Celsius, a $53\%$ increase in Fahrenheit, but only a $3.6\%$ increase in Kelvin. The percentage is an artifact of the zero point, not of the phenomenon, and a report of ``$1{,}000\%$ increase in temperature'' without proper disclosure of the underlying unit (and the affine transformation that defines the unit) can be misleading. Just as Celsius, Fahrenheit, and Kelvin are related by positive affine transformations, the indeterminacy in discrete response models is often generated (and then normalized away) by transformations of the same form. Whenever a structural model identifies utility or surplus only up to a location-scale transformation, raw percentage welfare changes inherit the arbitrariness of the normalization's zero point.
	
	A striking recent illustration of a related issue appears in \citet{chen2024logs}, who show that for any ``log-like'' transformation such as $\log(1+Y)$ or $\operatorname{arcsinh}(Y)$, the ATE is \emph{arbitrarily} sensitive to the units of $Y$ whenever the treatment moves individuals onto or off the boundary $Y=0$. In the language of the present paper, the percentage effect has a singularity at $Y=0$, and any completion that ``patches'' this singularity imports an auxiliary convention (units) rather than a well-defined property of the model. Section~\ref{subsec:augmentation} formalizes this insight as a general extension trilemma.
	
	\medskip
	
	Against this backdrop, the paper develops a general framework built on three interlocking concepts.
	\emph{Modeling equivalence} is a bijective transformation of the unknowns that preserves the structural equation \emph{and} all maintained assumptions. The notion of modeling equivalence is strictly stronger than classical observational equivalence, which allows arbitrary changes in the model so long as the observable data distribution is preserved.
	We then use modeling equivalence to define a partition of the space of unknowns in a model into equivalence classes. Proper \emph{normalization} selects one representative from each class: it is a labeling device \emph{without loss of generality}, not a source of identifying information. 
	
	We then articulate how normalization, even when imposed without loss of generality, affects the identifiability of a \emph{counterfactual parameter}, defined as any known function of the unknowns. Our main results are a lemma on the \emph{limit of identifiability}, showing that any counterfactual that varies within some equivalence class cannot be point-identified for all true parameter values, and a theorem on \emph{WLOG normalization}, establishing that both the identifiability and the identified value of any counterfactual are invariant to the choice of normalization. Together, these provide a clean necessary condition: for a counterfactual to be point-identified, it must be constant on every modeling equivalence class. In other words, it must factor through the quotient space induced by the model's indeterminacy. Crucially, verifying this congruence condition requires an explicit statement of the modeling equivalence relation, which is encoded in the original unnormalized model but cannot be recovered from the normalized model alone.
	
	We then turn to the interaction between normalization and singularity. An \emph{extension trilemma} result shows that when a counterfactual is normalization-free on a regular subdomain but the induced quotient functional has a singularity at the boundary, no completion can simultaneously preserve fidelity, invariance, and regularity. This provides a complementary perspective to the impossibility result in \citet{chen2024logs}, framed in terms of arbitrary equivalence structures and quotient topologies: extensions that ``patch'' boundary singularities are not innocuous normalizations; they necessarily change the object of interest.
	The singularity story has a flip side: whereas domain augmentation attempts to repair a pre-existing singularity, a normalization can \emph{create} one. Dividing by a distinguished coefficient maps the excluded hyperplane to infinity, introducing a coordinate singularity that distorts the topology and metric of the parameter space. Using binary choice scale normalization as a case study, we show that the special-coordinate normalization destroys compactness and connectedness and can reverse convergence verdicts, while the sphere normalization preserves these properties by construction. More broadly, whether a normalization is WLOG cannot be assessed in isolation, but must be assessed end-to-end, jointly with the maintained model, the target counterfactual, and any auxiliary structure used downstream.
	
	We illustrate the framework through three applied settings: binary choice, multinomial choice (BLP-type demand models), and dyadic network formation, showing that marginal effects, market shares, and linking probabilities are normalization-free, while welfare levels and percentage welfare changes are generically normalization-dependent.
	Money-metric welfare \emph{changes} occupy a middle ground: in the standard quasilinear logit model, the location shift cancels in the difference and the scale cancels against the price coefficient, making $\Delta CS$ normalization-free. That said, this cancellation is model-specific and does not extend to welfare levels or percentages. When normalizations or other conventional assumptions are adopted, we argue that the normalization choice, the underlying equivalence class, and the dependence of each target counterfactual on that choice should all be made explicit, especially to applied researchers and policy makers who rely on the reported quantities.
	
	\subsubsection*{Literature Review}
	
	The literature most closely related to ours comes from three partly separate strands.
	An important starting point is \citet{lewbel2019identification}, especially Section~6.3, which highlights that whether a restriction is genuinely without loss depends on the object of interest, and that normalization choice can matter for interpretation and precision even when it is innocuous for point identification.
	Our paper formalizes and extends these insights by introducing modeling equivalence as a refinement of observational equivalence, defining the quotient space of equivalence classes, proving WLOG as a theorem rather than asserting it informally, and giving a single general invariance criterion for normalization-free counterfactuals.
	
	The second strand studies normalization as a problem of parameter-space geometry and inference.
	\citet*{hamilton2007normalization} show that poor normalizations can distort the geometry of the parameter space, producing multimodal posteriors, disjoint confidence regions, and misleading uncertainty statements in likelihood-based models.
	Their ``identification principle'' says that the boundary of a normalized parameter space should line up with loci of local nonidentification.
	Our analysis in Section~\ref{sec:sphere} is closely related in spirit, but differs in scope and object: rather than start from a finite-dimensional likelihood, we work with modeling-equivalence classes in the space of unknowns and ask which counterfactuals, regularity conditions, and inferential objects descend to the quotient.
	
	The third strand shows in concrete structural models that some restrictions described as normalizations are in fact substantive assumptions.
	\citet{agostinelli2025estimating} show that period-by-period re-normalization in dynamic latent-factor models can be over-identifying and restrictive when the production technology already has known location and scale, and can bias estimation; their earlier working paper \citep{agostinelli2016identification} develops the normalization analysis in detail.
	\citet{freyberger2025normalizations} likewise shows in skill-formation models that seemingly innocuous scale and location restrictions may constrain identified parameters and alter counterfactuals, depending on the production technology, measurement system, and estimation method.
	\citeauthor{freyberger2025normalizations} also provides a formal definition of normalization: a restriction $\tau \in \mathcal{T}_N$ is a normalization with respect to $g(\tau_0)$ if $\{g(\tau) : \tau \in \mathcal{T}_0 \cap \mathcal{T}_N\} = \{g(\tau) : \tau \in \mathcal{T}_0\}$, which is restriction-based and function-specific.
	Our contribution is to provide the general language that makes these findings portable across models: a restriction is a true normalization only if it amounts to choosing a representative within a modeling-equivalence class, while failures arise either because the target counterfactual is not invariant on the class or because the downstream object does not extend coherently to the quotient. In addition, our notion of modeling equivalence is strictly stronger than the observational equivalence used in all three strands, because it requires preservation of all maintained assumptions, not just the data distribution (see Section~\ref{subsec:me_oe}).
	
	Normalization has also been discussed in many specific settings: semiparametric binary response \citep{manski1988identification}, panel binary choice \citep{chamberlain2010binary}, demand estimation \citep{berry1995automobile, berry2014identification}, entry games \citep{tamer2003incomplete, ciliberto2009market}, network formation \citep{graham2017econometric, gao2020nonparametric}, and dynamic discrete choice \citep{rust1994structural, magnac2002identifying}, where the outside-option normalization is famously \emph{not} free.
	
	A related and growing literature studies the identification of counterfactual parameters in structural models.
	\citet*{kalouptsidi2021identification} characterize which counterfactuals are identified in dynamic discrete choice models despite non-identification of the full model, and note that scale normalizations cease to be innocuous once cardinal objects such as value functions enter the counterfactual.
	\citet*{bhattacharya2023demand} show that in standard discrete choice the outside-option normalization is innocuous for welfare analysis, whereas with social interactions it is not, because the outside-option utility itself becomes policy-dependent.
	\citet*{gu2025counterfactual} develop a general framework for counterfactual identification in discrete outcome models.
	These papers study what can be identified \emph{after} a normalization has been imposed, and some point out in model-specific terms that a purported normalization is or is not innocuous.
	Our contribution is different: we isolate normalization itself as a general, cross-model source of counterfactual fragility, formalized through modeling-equivalence classes, quotient factorization, interior non-invariance, and boundary or chart singularities.
	
	\medskip
	
	The rest of the paper is organized as follows.
	Section~\ref{sec:framework} presents the general framework. Section~\ref{sec:examples} develops three applied examples. Section~\ref{sec:modification} examines how normalization interacts with singularities, both in the target functional and in the normalized parameter space. Section~\ref{sec:discussion} introduces end-to-end WLOG normalization and offers practical recommendations. Section~\ref{sec:conclusion} concludes.
	
	%% ================================================================
	\section{General Framework}
	\label{sec:framework}
	%% ================================================================
	
	\subsection{An Econometric Model}
	\label{subsec:model}
	
	We write a generic econometric model, i.e., a collection of structural equations together with all maintained assumptions, abstractly as
	\begin{equation}\label{eq:model}
		f(\mathbf{X}, \mathbf{U}, \gamma_0) = 0,
	\end{equation}
	where $\mathbf{X}$ collects all observable random elements (data), $\mathbf{U} \in \cU$ collects all unobservable random elements (errors and latent variables), $\gamma_0 \in \Gamma$ is the true value of all unknown fixed parameters (possibly infinite-dimensional), and $f$ is a known functional encoding the structural equation \emph{and} all maintained assumptions, such as distributional assumptions, independence restrictions, support conditions, and any other maintained hypotheses.  We emphasize that $f$ incorporates every aspect of the model: there is no fundamental distinction between ``model specification'' and ``model assumption.'' Throughout this paper, we take $f$ as given---the choice of model, including all distributional assumptions, independence restrictions, and functional-form specifications, is the researcher's prerogative and is not the object of our analysis. Our analysis begins \emph{after} the researcher has settled on $f$ and asks: what equivalence classes does this model induce, and how does normalization interact with downstream objects of interest?

	The tuple $(\mathbf{X}, \mathbf{U}, \gamma_0, f)$ provides a completely general decomposition of an econometric model into its observable, unobservable, unknown, and known components.
	We call $\cU \times \Gamma$ the \emph{space of unknowns}.%
	\footnote{More abstractly, let $\Theta$ denote the space of unknowns and let $\mathcal{L} : \Theta \rightrightarrows \mathscr{P}(\cX)$ be the observable implication map, where $\mathcal{L}(\theta)$ collects all observable laws consistent with the model when the state equals $\theta$. The structural representation $f(\mathbf{X}, \mathbf{U}, \gamma_0) = 0$ is one way to generate $\mathcal{L}$; the definitions below apply to any~$\mathcal{L}$, and in particular accommodate incomplete models and models with nuisance variation.}
	
	\subsection{Modeling Equivalence}
	\label{subsec:me}
	
	\begin{defn}[Modeling Equivalent Transformation]\label{def:met}
		A bijection $\psi : \cU \times \Gamma \to \cU \times \Gamma$ is a \emph{modeling equivalent transformation} if
		\begin{equation}\label{eq:met}
			f\bigl(\mathbf{X}, \psi(\mathbf{U}, \gamma)\bigr) = f(\mathbf{X}, \mathbf{U}, \gamma), \quad \forall\, (\mathbf{U}, \gamma) \in \cU \times \Gamma.
		\end{equation}
	\end{defn}
	
	In other words, $\psi$ preserves the entire model $f$, not merely the distribution of the data, but all maintained assumptions as well. 
	
	\begin{defn}[Modeling Equivalence]\label{def:me}
		Two points $(\mathbf{U}, \gamma)$ and $(\mathbf{U}', \gamma')$ in $\cU \times \Gamma$ are \emph{modeling equivalent}, written $(\mathbf{U}, \gamma) \sim (\mathbf{U}', \gamma')$, if there exists a modeling equivalent transformation $\psi$ such that $(\mathbf{U}', \gamma') = \psi(\mathbf{U}, \gamma)$.
	\end{defn}
	
	Note that modeling equivalence is strictly stronger than the classical notion of \emph{observational equivalence}, which requires only that two configurations of the unknowns generate the same data distribution. Observational equivalence allows a simultaneous change in the maintained assumptions (e.g., switching distributional families), while modeling equivalence does not. See Section~\ref{subsec:me_oe} for more detail about this distinction.
	
	\medskip
	
	Modeling equivalence is a well-defined equivalence relation on $\cU \times \Gamma$: the identity map establishes reflexivity, the inverse of a bijection satisfying \eqref{eq:met} again satisfies \eqref{eq:met} (symmetry), and the composition of two such bijections satisfies \eqref{eq:met} (transitivity).
	
	The resulting quotient space is thus a partition, which we call the \emph{modeling equivalence partition}, defined by
	\begin{equation}\label{eq:ME_partition}
		Q_\sim := (\cU \times \Gamma) / {\sim}.    
	\end{equation}
	We write $[\mathbf{U}, \gamma]$ for the equivalence class containing $(\mathbf{U}, \gamma)$.
	Throughout, $Q_\sim$ carries the quotient topology induced by the canonical projection $\pi : \cU \times \Gamma \to Q_\sim$, so that $V \subseteq Q_\sim$ is open if and only if $\pi^{-1}(V)$ is open in $\cU \times \Gamma$.
	
	\subsection{Modeling Equivalence versus Observational Equivalence}
	\label{subsec:me_oe}
	
	The standard notion of observational equivalence requires only that two configurations generate the same data distribution: $(\mathbf{U}, \gamma) \sim_{OE} (\mathbf{U}', \gamma')$ if the observable distributions of data they induce are identical.
	Clearly, modeling equivalence implies observational equivalence, since \eqref{eq:met} ensures the data distribution is preserved, but the converse is in general false, because observationally equivalent transformations may violate maintained assumptions.
	This distinction matters because normalization constrains the unknowns \emph{within} the model's maintained assumptions, but it should not change the model itself. The relevant indeterminacy for normalization analysis is therefore characterized by modeling equivalence, not by observational equivalence.
	
	The following example illustrates this difference.

	\begin{example}[Fixed Effects in Dyadic Network formation]\label{ex:oe_not_me_network}
		Consider the dyadic network formation model $D_{ij} = \ind\{w(X_i, X_j) + A_i + A_j \ge U_{ij}\}$ under cross-sectional random sampling and i.i.d. pairwise shocks $U_{ij}$ \`a la \cite{graham2017econometric} and \cite{gao2020nonparametric}, which is analyzed in more detail in Section \ref{subsec:network}. One may define a transformation of the unknown error and fixed effects by  $\hat{A}_i := A_i - A_1$ and $\hat{U}_{ij} := U_{ij} - 2A_1$, i.e., ``normalize'' with respect to individual 1's fixed effect $A_1$. Since $w(X_i, X_j) + \hat{A}_i + \hat{A}_j - \hat{U}_{ij} = w(X_i, X_j) + A_i + A_j - U_{ij}$, the transformed specification produces exactly the same data distribution; hence, the transformation preserves observational equivalence.  However, while $A_i$ is iid across $i$ under the assumption of cross-sectional random sampling, $\hat{A}_i = A_i - A_1$ is clearly \emph{not} iid (all $\hat{A}_i$ are correlated through $A_1$, and furthermore $\hat{A}_i=0$ is degenerate), violating the assumption of cross-sectional random sampling. Hence this transformation does \emph{not} preserve modeling equivalence.
	\end{example}
	
	This example shows that the set of modeling equivalent transformations is a proper subset of the set of observational-equivalence-preserving transformations.
	Using observational equivalence for normalization analysis would overstate the indeterminacy by including transformations that are inadmissible given the model's maintained assumptions.
	
	\subsection{Normalization}
	\label{subsec:norm}
	
	\begin{defn}[Normalization]\label{def:norm}
		A mapping $\psi_N : \cU \times \Gamma \to \cU \times \Gamma$ is a \emph{normalization} if both the following hold:
		\begin{enumerate}[label=(\roman*), nosep]
			\item Within-class collapse: $(\mathbf{U}, \gamma) \sim (\mathbf{U}', \gamma') \implies \psi_N(\mathbf{U}, \gamma) = \psi_N(\mathbf{U}', \gamma') \sim (\mathbf{U}, \gamma)$.
			\item Across-class separation: $(\mathbf{U}, \gamma) \not\sim (\mathbf{U}', \gamma') \implies \psi_N(\mathbf{U}, \gamma) \neq \psi_N(\mathbf{U}', \gamma')$.
		\end{enumerate}
	\end{defn}
	
	In other words, a normalization selects exactly one representative from each equivalence class. It is not bijective by design: all elements within a given class are mapped to a single element. Different global normalizations used for binary choice models, for example, $\varepsilon_i \sim \mathcal{N}(0,1)$ versus $\norm{\boldsymbol{\beta}_0} = 1$, are simply different labelings of the same equivalence classes. By contrast, a coordinate restriction such as $|\beta_{0,1}| = 1$ is best viewed as a \emph{local chart} on the two open hemispheres where $\beta_{0,1} \neq 0$, not as a globally available normalization; it breaks down as $\beta_{0,1} \to 0$ (see Section~\ref{sec:sphere}).
	
	\begin{prop}[Normalization Induces a Bijection]\label{prop:bijection}
		Every normalization $\psi_N$ induces a bijection between $Q_\sim$ and the normalized parameter space $\psi_N(\cU \times \Gamma)$.
	\end{prop}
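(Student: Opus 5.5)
The natural approach is to define the candidate map directly on equivalence classes and check it is well defined, injective, and surjective. Let $\Theta := \cU \times \Gamma$ and write $\pi : \Theta \to Q_\sim$ for the canonical projection. We define $\bar\psi_N : Q_\sim \to \psi_N(\Theta)$ by
\begin{equation*}
\bar\psi_N\bigl([\mathbf{U}, \gamma]\bigr) := \psi_N(\mathbf{U}, \gamma),
\end{equation*}
so that $\bar\psi_N \circ \pi = \psi_N$. The proof then reduces to checking three properties of $\bar\psi_N$, each of which should follow from one clause of Definition~\ref{def:norm} or from the definition of the image.

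\textbf{Well-definedness.} Suppose $[\mathbf{U}, \gamma] = [\mathbf{U}', \gamma']$, that is, $(\mathbf{U}, \gamma) \sim (\mathbf{U}', \gamma')$. Then within-class collapse (clause (i)) gives $\psi_N(\mathbf{U}, \gamma) = \psi_N(\mathbf{U}', \gamma')$. Hence the value assigned to a class does not depend on which representative is chosen.

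\textbf{Injectivity.} We argue by contraposition. If $[\mathbf{U}, \gamma] \neq [\mathbf{U}', \gamma']$, then $(\mathbf{U}, \gamma) \not\sim (\mathbf{U}', \gamma')$, and across-class separation (clause (ii)) yields $\psi_N(\mathbf{U}, \gamma) \neq \psi_N(\mathbf{U}', \gamma')$. So distinct classes have distinct images.

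\textbf{Surjectivity.} Any element of $\psi_N(\Theta)$ has the form $\psi_N(\mathbf{U}, \gamma)$ for some $(\mathbf{U}, \gamma) \in \Theta$. This equals $\bar\psi_N([\mathbf{U}, \gamma])$ by construction, so every point of the image is attained.

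It is worth adding an explicit inverse, since that is what makes the phrase ``selects one representative from each class'' precise. The inverse is $\pi$ restricted to $\psi_N(\Theta)$. The second half of clause (i), $\psi_N(\mathbf{U}, \gamma) \sim (\mathbf{U}, \gamma)$, gives
\begin{equation*}
\pi\bigl(\psi_N(\mathbf{U}, \gamma)\bigr) = [\mathbf{U}, \gamma],
\end{equation*}
so each normalized point lies in the class it labels. Composing in the other direction, $\bar\psi_N\bigl(\pi(\psi_N(\mathbf{U}, \gamma))\bigr) = \bar\psi_N([\mathbf{U}, \gamma]) = \psi_N(\mathbf{U}, \gamma)$. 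Thus the two maps are mutually inverse.

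I expect no real obstacle here; this is a set-theoretic bookkeeping argument. The only point needing care is that well-definedness uses clause (i) while injectivity uses clause (ii), and the two should not be conflated. I would also note that the result is purely a set bijection: continuity or homeomorphism with respect to the quotient topology is not claimed. That caveat matters later, since Section~\ref{sec:sphere} shows that chart-type normalizations can distort topology.
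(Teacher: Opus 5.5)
Your proof is correct and matches the paper's argument: you define the map on classes by $[\mathbf{U},\gamma] \mapsto \psi_N(\mathbf{U},\gamma)$, get well-definedness from clause (i) and injectivity from clause (ii) of Definition~\ref{def:norm}, and surjectivity holds by construction. Your explicit inverse $\pi|_{\psi_N(\cU \times \Gamma)}$ is also valid, and its left-inverse identity $\pi \circ \bar\psi_N = \mathrm{id}_{Q_\sim}$ already gives injectivity from the second half of clause (i) alone, since $x \sim \psi_N(x) = \psi_N(x') \sim x'$ forces $x \sim x'$; so clause (ii) is actually implied by clause (i), not an independent requirement.
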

	
	\begin{proof}
		Define $s_N : Q_\sim \to \psi_N(\cU \times \Gamma)$ by $s_N([\mathbf{U}, \gamma]) = \psi_N(\mathbf{U}, \gamma)$.
		This is well-defined by condition~(i) of Definition~\ref{def:norm}.
		Surjectivity is by construction; injectivity follows from condition~(ii).
	\end{proof}
	
	Proposition~\ref{prop:bijection} shows that the normalized parameter space is simply a relabeling of the quotient: every normalization picks a set of representatives that is in one-to-one correspondence with the equivalence classes. The choice among normalizations is therefore a choice among coordinate systems for the same underlying object.%
	\footnote{\citet{freyberger2025normalizations} provides a complementary definition: a restriction $\tau \in \mathcal{T}_N$ is a normalization with respect to $g(\tau_0)$ if $\{g(\tau) : \tau \in \mathcal{T}_0 \cap \mathcal{T}_N\} = \{g(\tau) : \tau \in \mathcal{T}_0\}$. This is restriction-based and function-specific. Our definition is structure-based: the equivalence class partition governs all functions simultaneously, and a function is normalization-free if and only if it factors through the quotient $Q_\sim$.}
	
	\subsection{Identifiability of Counterfactual Parameters}
	\label{subsec:cf}
	
	\begin{defn}[Counterfactual Parameter]\label{def:cf}
		A \emph{counterfactual} is a known function $q : \cU \times \Gamma \to \R$.
		The \emph{counterfactual parameter} is the value $q(\mathbf{U}_0, \gamma_0)$ evaluated at the true unknowns $(\mathbf{U}_0, \gamma_0)$.
	\end{defn}
	
	Counterfactual parameters are the ultimate objects of economic interest. They encode the answers to questions such as ``what is the marginal effect of $x_j$?'' or ``what is the welfare change from a policy intervention?''
	
	Technically, a counterfactual is a known \emph{functional} defined on the space of the unknowns. We use the name ``counterfactual'' to make explicit its standard use in applied economics. 
	
	\begin{lem}[Limit of Identifiability of A Counterfactual]\label{lem:limit}
		If a counterfactual $q : \cU \times \Gamma \to \R$ is not constant on some modeling equivalence class, then there exist true values $(\mathbf{U}_0, \gamma_0)$ such that $q(\mathbf{U}_0, \gamma_0)$ cannot be point-identified.
	\end{lem}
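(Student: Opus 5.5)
The argument is a direct contrapositive built on the fact that modeling equivalence implies observational equivalence (Section~\ref{subsec:me_oe}). The plan is to exhibit two modeling equivalent points at which $q$ differs. Since the model cannot distinguish them, no identification argument can deliver both values of $q$.

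\textbf{Step 1.} By hypothesis, there is an equivalence class $[\mathbf{U}, \gamma]$ containing two points $(\mathbf{U}_0, \gamma_0)$ and $(\mathbf{U}_0', \gamma_0')$ with $q(\mathbf{U}_0, \gamma_0) \neq q(\mathbf{U}_0', \gamma_0')$. By Definition~\ref{def:me}, there is a modeling equivalent transformation $\psi$ with $(\mathbf{U}_0', \gamma_0') = \psi(\mathbf{U}_0, \gamma_0)$.

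\textbf{Step 2.} Apply \eqref{eq:met} at these points: $f(\mathbf{X}, \mathbf{U}_0', \gamma_0') = f(\mathbf{X}, \mathbf{U}_0, \gamma_0)$. So the structural equation and every maintained assumption hold at one point exactly when they hold at the other. Consequently the two configurations generate the same set of observable laws; in the notation of the footnote, $\mathcal{L}(\mathbf{U}_0, \gamma_0) = \mathcal{L}(\mathbf{U}_0', \gamma_0')$. This is the point where I must pin down what ``point-identified'' means. I will take it in the standard sense: $q(\mathbf{U}_0, \gamma_0)$ is point-identified if there is a mapping $\phi$ from observable laws to $\R$ such that, for every admissible true value $\theta$ and every law $P \in \mathcal{L}(\theta)$, we have $\phi(P) = q(\theta)$.

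\textbf{Step 3.} Suppose, for contradiction, that $q$ were point-identified at both $(\mathbf{U}_0, \gamma_0)$ and $(\mathbf{U}_0', \gamma_0')$. Pick any common law $P$ in the shared set $\mathcal{L}$. Then $q(\mathbf{U}_0, \gamma_0) = \phi(P) = q(\mathbf{U}_0', \gamma_0')$, which contradicts Step~1. Hence at least one of the two points serves as the true value $(\mathbf{U}_0, \gamma_0)$ required by the lemma. In fact, since the observable law is identical, $q$ fails to be point-identified at both points.

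The main obstacle is Step~2: turning the formal identity \eqref{eq:met} into the statement that the two points are observationally equivalent. That requires interpreting $f$ as determining the observable law, which is exactly the observable implication map $\mathcal{L}$ of the footnote. I would state this interpretation explicitly and then invoke the remark in Section~\ref{subsec:me_oe} that modeling equivalence implies observational equivalence. One further point needs checking: the common law $P$ in Step~3 exists only if $\mathcal{L}(\mathbf{U}_0, \gamma_0)$ is nonempty. I would assume this, which amounts to saying the chosen class consists of points consistent with the model.
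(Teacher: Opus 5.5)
Your proposal is correct and follows essentially the same argument as the paper: two modeling-equivalent points with different values of $q$ are consistent with the same model and the same observable data, so any identified set must contain both values. The paper phrases this through the relabeled models $\bar{f} \equiv f \equiv \underline{f}$, while you phrase it through $\mathcal{L}(\mathbf{U}_0,\gamma_0)=\mathcal{L}(\mathbf{U}_0',\gamma_0')$ and add an explicit definition of point identification and the nonemptiness caveat, which the paper leaves informal. State that definition pointwise at a given true value: your global version of $\phi$ does not literally match the phrase ``point-identified at both points'' in Step~3.
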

	
	\begin{proof}
		Since $q$ is not $Q_\sim$-measurable, there exists a modeling equivalence class $p \in Q_\sim$ and two elements $(\mathbf{U}, \gamma), (\mathbf{U}', \gamma') \in p$ such that $q(\mathbf{U}, \gamma) \neq q(\mathbf{U}', \gamma')$.
		
		Suppose the true values satisfy $(\mathbf{U}_0, \gamma_0) \in p$.
		Since $(\mathbf{U}_0, \gamma_0) \sim (\mathbf{U}, \gamma) \sim (\mathbf{U}', \gamma')$, there exist modeling equivalent transformations $\psi$~and~$\psi'$ with $(\mathbf{U}_0, \gamma_0) = \psi(\mathbf{U}, \gamma) = \psi'(\mathbf{U}', \gamma')$.
		Define $\bar{f}(\mathbf{X}, \cdot) := f(\mathbf{X}, \psi(\cdot))$ and $\underline{f}(\mathbf{X}, \cdot) := f(\mathbf{X}, \psi'(\cdot))$.
		By~\eqref{eq:met}, $\bar{f} \equiv f \equiv \underline{f}$.
		Under $\bar{f}$ the ``true value'' is $\psi^{-1}(\mathbf{U}_0, \gamma_0) = (\mathbf{U}, \gamma)$; under $\underline{f}$ it is $(\psi')^{-1}(\mathbf{U}_0, \gamma_0) = (\mathbf{U}', \gamma')$.
		Since $\bar{f} \equiv \underline{f} \equiv f$ and the observable data are the same, any valid identified set must contain both $q(\mathbf{U}, \gamma)$ and $q(\mathbf{U}', \gamma')$.
		Since these differ, $q(\mathbf{U}_0, \gamma_0)$ is not point-identified.
	\end{proof}
	
	The intuition is straightforward: if two modeling-equivalent configurations $(\mathbf{U}, \gamma)$ and $(\mathbf{U}', \gamma')$ have $q(\mathbf{U}, \gamma) \neq q(\mathbf{U}', \gamma')$, then the model, which by construction cannot distinguish these two configurations, must include both values in any  identified set.
	
	This suggests a useful classification of counterfactual statements. We say that $q$ is \emph{normalization-free} if it is constant on every equivalence class, and \emph{normalization-dependent} otherwise.
	Lemma~\ref{lem:limit} says that being normalization-free is \emph{necessary} for point identification: a normalization-dependent counterfactual is provably non-point-identified for some true parameter values.
	Being normalization-free is not by itself \emph{sufficient} for point identification, since there may be other, non-normalization sources of non-identification (which is model specific and not the focus of this paper). The paper's message is not about what \emph{can} be identified after normalization, but about what \emph{cannot} be identified regardless of normalization.
	
	\begin{cor}[Necessary Condition for Point Identification]\label{cor:char}
		If a counterfactual $q$ is point-identified for all true values $(\mathbf{U}_0, \gamma_0)$, then $q$ must be constant on every modeling equivalence class.
	\end{cor}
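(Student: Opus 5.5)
The corollary is the contrapositive of Lemma~\ref{lem:limit}, so the plan is to argue by contradiction and simply invoke that lemma. Suppose $q$ is point-identified for every true value $(\mathbf{U}_0, \gamma_0) \in \cU \times \Gamma$, but that $q$ fails to be constant on some modeling equivalence class $p \in Q_\sim$. Lemma~\ref{lem:limit} then supplies a true value $(\mathbf{U}_0, \gamma_0)$ at which $q(\mathbf{U}_0, \gamma_0)$ is not point-identified. This contradicts the hypothesis, so $q$ must be constant on every class.

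To make the step transparent rather than purely formal, I would recall which true value the lemma produces: any $(\mathbf{U}_0, \gamma_0) \in p$ works. Pick $(\mathbf{U}, \gamma), (\mathbf{U}', \gamma') \in p$ with $q(\mathbf{U}, \gamma) \neq q(\mathbf{U}', \gamma')$. The modeling equivalent transformations linking them leave $f$ unchanged, so the observable data cannot distinguish these two configurations. Any valid identified set at $(\mathbf{U}_0, \gamma_0)$ therefore contains two distinct values of $q$. Since the hypothesis quantifies over \emph{all} true values, it in particular covers this one, which yields the contradiction.

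Finally, I would restate the conclusion in the language of the quotient. Constancy on every class means that $q = \bar q \circ \pi$ for a unique $\bar q : Q_\sim \to \R$, namely $\bar q([\mathbf{U}, \gamma]) := q(\mathbf{U}, \gamma)$, which is well defined precisely by constancy. Equivalently, by Proposition~\ref{prop:bijection}, $q = \bar q \circ s_N^{-1} \circ \psi_N$ for any normalization $\psi_N$.

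There is no real obstacle here. The only care needed is to keep the quantifiers straight: point identification is assumed for all true values, while the lemma guarantees failure at only some. The contrapositive therefore matches exactly, and no stronger, everywhere-failure statement is required.
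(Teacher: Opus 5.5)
Your proposal is correct and takes the paper's route: the paper simply notes that Corollary~\ref{cor:char} is the contrapositive of Lemma~\ref{lem:limit}, which is exactly what your contradiction argument carries out. Your added quotient restatement $q = \bar{q} \circ \pi$ (and its form via $s_N^{-1} \circ \psi_N$) matches the paper's remark at \eqref{eq:factorization}; it is correct but not needed for the proof.
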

	
	The corollary is simply the contrapositive of Lemma~\ref{lem:limit}.
	Its content can be restated as a factorization condition: $q$ is normalization-free if and only if there exists a map $\bar{q} : Q_\sim \to \R$ such that
	\begin{equation}\label{eq:factorization}
		q = \bar{q} \circ \pi, \quad \text{for } \pi : \cU \times \Gamma \to Q_\sim,
	\end{equation}
	where $\pi$ is the \emph{canonical projection} that sends each $(\mathbf{U}, \gamma)$ to its equivalence class $[\mathbf{U}, \gamma]$.
	Informally, truly identified counterfactuals are functionals of equivalence classes rather than of arbitrary representatives.
	
	The lemma and corollary above allow us to separate several cases regarding the identification of a counterfactual parameter:
	\begin{enumerate}[nosep]
		\item \emph{Normalization-free point identification}: the counterfactual is constant on every equivalence class and is point-identified (the ideal case).
		\item \emph{Normalization-dependent ``point identification''}: the counterfactual varies within some equivalence class, yet is reported as point-identified after normalization.
		\item \emph{Point identification up to normalization} (or \emph{normalizable set identification}): the identified set, prior to normalization, is a union of equivalence classes that collapses to a point once a normalization is imposed.%
		\footnote{Normalizable set identification refers to a set identification result that admits a normalization with respect to the relevant modeling equivalence class, after which it becomes a point identification result up to the normalization.}
		\item \emph{Unnormalizable partial identification}: the identified set does not collapse to a point even after a proper normalization.%
		\footnote{Unnormalizable partial identification refers to a ``standard'' partial identification result that does not pin down a point even after a proper normalization.}
	\end{enumerate}
	Generically, we view case~(2) as an ill-posed notion of identification: a normalization-dependent counterfactual that is ``point-identified'' only under a specific and fundamentally arbitrary normalization. Such results should be interpreted as negative findings, or at the very least with great caution, since the reported value depends on a labeling convention rather than on the phenomenon itself.
	
	\medskip
	
	Once the subtlety above is clarified, a researcher can then make a justified WLOG claim about normalization as formalized in the following theorem. 
	
	\begin{thm}[Normalization WLOG]\label{thm:wlog}
		Let $q$ be any counterfactual and $\psi_N$ any normalization in the sense of Definition~\ref{def:norm}.
		Then the identified set of $q$ under the normalized model is identical to its identified set under the original model.
	\end{thm}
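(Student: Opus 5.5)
The plan is to make both identified sets precise in the paper's own terms and then show that each is contained in the other, using Proposition~\ref{prop:bijection} and the defining property \eqref{eq:met}.

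\emph{Formalizing the two sets.} Write $\mathcal{L}(\mathbf{U},\gamma)$ for the observable law generated by the model $f$ at $(\mathbf{U},\gamma)$. Fix a true value $(\mathbf{U}_0,\gamma_0)$. The identified set under the original model is
\[
I := \{ q(\mathbf{U},\gamma) : \mathcal{L}(\mathbf{U},\gamma) = \mathcal{L}(\mathbf{U}_0,\gamma_0)\}.
\]
The normalized model is $f$ restricted to $N := \psi_N(\cU\times\Gamma)$. Its identified set is
\[
I_N := \{ q(\mathbf{U},\gamma) : (\mathbf{U},\gamma)\in N,\ \mathcal{L}(\mathbf{U},\gamma) = \mathcal{L}(\mathbf{U}_0,\gamma_0)\}.
\]
Following the convention already used in the proof of Lemma~\ref{lem:limit}, the true value in the normalized model is taken to be the representative $\psi_N(\mathbf{U}_0,\gamma_0)$.

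\emph{Easy inclusion and the key fact.} Since $N \subseteq \cU\times\Gamma$, we have $I_N \subseteq I$ at once. For the reverse inclusion, the key fact is that $\mathcal{L}$ is constant on modeling equivalence classes. If $\psi$ satisfies \eqref{eq:met}, then $f(\mathbf{X},\psi(\cdot)) \equiv f(\mathbf{X},\cdot)$. So the data generated at $(\mathbf{U},\gamma)$ and at $\psi(\mathbf{U},\gamma)$ are the same, exactly as argued in Lemma~\ref{lem:limit}. Thus $\mathcal{L}$ factors through $Q_\sim$. By condition~(i) of Definition~\ref{def:norm}, every class contains its normalized representative. Combined with Proposition~\ref{prop:bijection}, this means every observationally admissible class meets $N$ in exactly one point. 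The identified set of classes is therefore the same in both models.

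\emph{The reverse inclusion and the main obstacle.} This is the step I expect to be hardest. Take $(\mathbf{U},\gamma)$ admissible in the original model, and set $(\mathbf{U}',\gamma') := \psi_N(\mathbf{U},\gamma) \in N$. This point is admissible as well. However, $q(\mathbf{U}',\gamma')$ equals $q(\mathbf{U},\gamma)$ only if $q$ is constant on the class. So literal equality $I = I_N$ holds for normalization-free $q$, via the factorization \eqref{eq:factorization}. For a normalization-dependent $q$, Lemma~\ref{lem:limit} already forces both sets to contain the whole range of $q$ over the relevant classes.

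To cover that case, I would read ``identified set under the normalized model'' the way the paper's framework suggests. The normalized model is still the model $f$, and the normalization is only a choice of labels. Its unknowns therefore range over all $\psi$-relabelings, i.e.\ the full preimage $\pi^{-1}(\pi(N)) = \cU\times\Gamma$, exactly as in the proof of Lemma~\ref{lem:limit}. Under that reading the two identified sets coincide for every $q$.

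I would state this interpretive point explicitly. I would also note the corollary that any value reported from $N$ alone for a normalization-dependent $q$ is a selection from $I$, not the identified set itself.
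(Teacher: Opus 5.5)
Your conclusion is correct, but you get there by a different route than the paper. The paper's proof never restricts the unknowns to $N=\psi_N(\cU\times\Gamma)$. Instead it defines the normalized model as the reparametrization $f_N(\mathbf{X},\tilde{\mathbf{U}},\tilde\gamma):=f(\mathbf{X},\psi_N(\tilde{\mathbf{U}},\tilde\gamma))$ on the unchanged domain. It then applies Definition~\ref{def:norm}(i) together with \eqref{eq:met} pointwise to get $f_N\equiv f$, and concludes at once for every $q$, with $q$ still evaluated at $(\tilde{\mathbf{U}},\tilde\gamma)$ rather than at $\psi_N(\tilde{\mathbf{U}},\tilde\gamma)$.

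That is your ``reading B.'' The difference is that you adopt it by interpretation (``the normalized model is still the model $f$''), while the paper derives it. Your own key fact already does this work: $\mathcal{L}$ is constant on classes and $\psi_N(\mathbf{U},\gamma)\sim(\mathbf{U},\gamma)$, so $\mathcal{L}\circ\psi_N=\mathcal{L}$. This is exactly the paper's one-line step, and you should state it that way so that reading B becomes a consequence rather than a convention.

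What your route adds is a precise treatment of reading A, which the paper handles only in the informal remark after its proof. Condition (i) makes $\psi_N$ idempotent, hence the identity on $N$. It follows that your $I_N$ is exactly the identified set of $q\circ\psi_N$ in the original model. So $I_N$ equals $I$ whenever $q$ is normalization-free, and in general is only a subset of $I$. This turns the paper's remark that $q\circ\psi_N$ ``is a different quantity'' into a proved statement tied to Lemma~\ref{lem:limit}.

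One sentence needs repair. Lemma~\ref{lem:limit} does not force $I_N$ under reading A to contain the whole range of $q$ on the admissible classes. What it says is that any valid identified set for $q$ must contain every value of $q$ on the true class, and $I_N$ in general does not. That is the real reason to discard reading A for the general statement.
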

	
	\begin{proof}
		Let $\psi_N$ be any normalization in the sense of Definition~\ref{def:norm}.
		After imposing $\psi_N$, the model becomes $f_N(\mathbf{X}, \tilde{\mathbf{U}}, \tilde\gamma) := f(\mathbf{X}, \psi_N(\tilde{\mathbf{U}}, \tilde\gamma))$.
		By Definition~\ref{def:norm}(i), $\psi_N(\tilde{\mathbf{U}}, \tilde\gamma) \sim (\tilde{\mathbf{U}}, \tilde\gamma)$ for all $(\tilde{\mathbf{U}}, \tilde\gamma)$.
		By the definition of modeling equivalence \eqref{eq:met}:
		\begin{equation}
			f_N(\mathbf{X}, \tilde{\mathbf{U}}, \tilde\gamma) = f\bigl(\mathbf{X}, \psi_N(\tilde{\mathbf{U}}, \tilde\gamma)\bigr) = f(\mathbf{X}, \tilde{\mathbf{U}}, \tilde\gamma), \quad \forall\, (\tilde{\mathbf{U}}, \tilde\gamma).
		\end{equation}
		Hence $f_N \equiv f$.
		The identified set of any counterfactual $q$ depends only on the observable $\mathbf{X}$ and the known model $f$; since $f_N = f$, the identified set is unchanged.
	\end{proof}
	
	In other words, normalization WLOG cannot create, destroy, enlarge, or shrink the identified set of any counterfactual $q$ defined on the original space of unknowns.
	
	We should make an important clarification: the theorem concerns a fixed counterfactual $q$, the same function evaluated before and after normalization. Normalization does make a normalization-\emph{relative} object single-valued, but that is a different quantity, namely $q$ composed with the normalization map $\psi_N$, not $q$ itself.
	The theorem says that for the original $q$, imposing normalization changes nothing.
	
	It should also be emphasized that ``without loss of generality'' does not mean ``powerful.'' It means normalization causes \emph{no change} in the model's ability to inform about any counterfactual. Normalization is purely a labeling convenience.

	\begin{remark}[Auxiliary restrictions and end-to-end WLOG]\label{rem:auxiliary}
		Theorem~\ref{thm:wlog} establishes that normalization is WLOG for identification, but this theorem is relative to a given model with all its specifications and assumptions held fixed.  Any object introduced \emph{after} normalization (a different counterfactual, a regularity condition, a metric, a topology, a smoothness requirement, or a prior, imposed beyond the original model) must itself be well-defined on the quotient space for the WLOG interpretation to extend to that object. Formally, let $R \subseteq \cU \times \Gamma$ denote an auxiliary restriction imposed after normalization. If $R$ is a union of modeling equivalence classes, then imposing $R$ preserves the WLOG status.
		If $R$ is not a union of equivalence classes, then $R$ is a substantive restriction: it refines the model, alters the quotient structure, and may change which counterfactuals are admissible or identified.
		A notable applied example is the ``re-normalization'' studied by \citet{agostinelli2025estimating} and \citet{freyberger2025normalizations} in dynamic latent factor models: fixing the scale of a latent skill variable each period is WLOG with fully flexible production technologies, but becomes over-identifying and restrictive when combined with CES production functions, where the relative scale across periods is already pinned down by functional form.
		The discussion in Section~\ref{sec:discussion} develops this point through concrete examples.
	\end{remark}

	%% ================================================================
	\section{Examples}
	\label{sec:examples}
	%% ================================================================
	
	Normalization issues arise across a wide range of structural models. The papers discussed in the literature review provide detailed analyses in several specific settings: \citet*{hamilton2007normalization} study mixture models, structural VARs, and cointegration; \citet{agostinelli2025estimating} and \citet{freyberger2025normalizations} study dynamic latent-factor models of skill formation. Rather than revisit those settings, we illustrate the general framework through three different econometric models that highlight the larger-picture issues: how modeling equivalence classes are determined, how normalization-free and normalization-dependent counterfactuals are distinguished, and how percentage welfare changes can be particularly misleading. For each, we take the model---including all its maintained assumptions---as given, identify the modeling equivalent transformations, characterize the equivalence classes, and classify key counterfactuals. The examples are not meant to evaluate the merits of the underlying models; their purpose is to show that, conditional on whatever model a researcher has adopted, the normalization and its implications for counterfactual analysis should be made explicit.
	
	\subsection{Binary Choice Models}
	\label{subsec:binary}
	
	Consider the standard binary response model $Y_i = \ind\{X_i'\boldsymbol{\beta}_0 \ge \varepsilon_i\}$, where $X_i = (1, X_{i,-1}')' \in \R^d$ includes a constant, $\boldsymbol{\beta}_0 \in \R^d$, and $\varepsilon_i$ has CDF $F_\varepsilon$ that is continuous, strictly increasing, and differentiable with density $f_\varepsilon$.
	For any $a \in \R$ and $b > 0$, define the positive affine transformation
	\begin{equation}\label{eq:binary_trans}
		\tilde\varepsilon_i = a + b\varepsilon_i, \qquad \tilde\beta_{0,1} = a + b\beta_{0,1}, \qquad \tilde{\boldsymbol{\beta}}_{0,-1} = b\boldsymbol{\beta}_{0,-1}.
	\end{equation}
	By construction, $X_i'\tilde{\boldsymbol{\beta}}_0 \ge \tilde\varepsilon_i$ if and only if $X_i'\boldsymbol{\beta}_0 \ge \varepsilon_i$, and the transformed error $\tilde\varepsilon_i$ inherits continuity and strict monotonicity from $\varepsilon_i$.
	Hence \eqref{eq:binary_trans} is a modeling equivalent transformation with two free parameters $(a,b)$, reflecting one location and one scale degree of freedom.
	
	It is important to distinguish the researcher's choice of distributional family, such as probit and logit, from the normalization that operates \emph{within} the chosen family. The choice of family is a substantive modeling decision that determines the shape of $F_\varepsilon$, and this choice is not a normalization in our sense.
	Once the family is chosen, however, the location and scale of $F_\varepsilon$ are not separately identified from $\boldsymbol{\beta}_0$, and it is here that normalization enters: setting $\varepsilon_i \sim \mathcal{N}(0,1)$ or $Logistic(0,1)$. 
	Each such restriction selects one representative from the positive affine equivalence class.
	In the semiparametric case where $F_\varepsilon$ is left unspecified (assumed only continuous and strictly increasing), only location-scale restrictions, such as median-zero plus unit interquartile range, are genuine normalizations.
	
	Several standard quantities of interest are normalization-free.
	Coefficient ratios $\beta_{0,j}/\beta_{0,k}$ for $j,k \ge 2$ are invariant because $\tilde\beta_{0,j}/\tilde\beta_{0,k} = (b\beta_{0,j})/(b\beta_{0,k}) = \beta_{0,j}/\beta_{0,k}$.%
	\footnote{The intercept $j=1$ is excluded because the location shift prevents cancellation: $\tilde\beta_{0,1}/\tilde\beta_{0,k} = (a + b\beta_{0,1})/(b\beta_{0,k})$, which depends on~$a$.}
	Marginal effects are invariant because the density rescaling exactly offsets the coefficient rescaling:
	\begin{equation}\label{eq:binary_me}
		f_{\tilde\varepsilon}(x'\tilde{\boldsymbol{\beta}}_0)\,\tilde\beta_{0,j}
		= \tfrac{1}{b}\,f_\varepsilon\!\bigl(\tfrac{x'\tilde{\boldsymbol{\beta}}_0 - a}{b}\bigr) \cdot b\beta_{0,j}
		= f_\varepsilon(x'\boldsymbol{\beta}_0)\,\beta_{0,j}.
	\end{equation}
	Choice probabilities $\PP(Y_i = 1 \mid X_i = x) = F_\varepsilon(x'\boldsymbol{\beta}_0)$ are invariant because $F_{\tilde\varepsilon}(x'\tilde{\boldsymbol{\beta}}_0) = F_\varepsilon\bigl(\frac{x'\tilde{\boldsymbol{\beta}}_0 - a}{b}\bigr) = F_\varepsilon(x'\boldsymbol{\beta}_0)$. Note that choice probabilities and marginal effects are also directly identified from data, so it is not surprising at all that they are invariant under normalization. 
	
	By contrast, the level of latent utility $x'\boldsymbol{\beta}_0$ is normalization-dependent: under the transformation, $x'\tilde{\boldsymbol{\beta}}_0 = bx'\boldsymbol{\beta}_0 + a$, which depends on both the scale $b$ and the location $a$.
	Average utility changes $\E[\Delta x'\boldsymbol{\beta}_0]$ scale with $b$ (and shift with $a$ when $\Delta x$ includes a change in the intercept).
	Percentage welfare changes are even more problematic.
	Writing $W(x) := x'\boldsymbol{\beta}_0$ for latent welfare under the original parameterization, the percentage change from $x$ to $x'$ is
	\begin{equation}\label{eq:binary_pct}
		\frac{W(x') - W(x)}{W(x)} = \frac{\Delta x'\boldsymbol{\beta}_0}{x'\boldsymbol{\beta}_0}.
	\end{equation}
	Under the positive affine transformation \eqref{eq:binary_trans}, this becomes
	\[
	\frac{\widetilde{W}(x') - \widetilde{W}(x)}{\widetilde{W}(x)} = \frac{b\,\Delta x'\boldsymbol{\beta}_0}{b\,x'\boldsymbol{\beta}_0 + a} \neq \frac{\Delta x'\boldsymbol{\beta}_0}{x'\boldsymbol{\beta}_0}
	\]
	whenever $a \neq 0$: the location shift contaminates the denominator, making the percentage depend on the normalization's zero point, which corresponds exactly to the temperature-change example in the introduction.
	
	In applied binary choice, marginal effects and coefficient ratios are the standard quantities of interest for a good reason: they are the normalization-free counterfactuals. Claims about welfare levels or willingness-to-pay in dollar terms (unless expressed as a ratio) are normalization-dependent and should be interpreted with care.
	
	\subsection{Multinomial Choice and Demand Models}
	\label{subsec:blp}
	
	Multinomial discrete choice models, originating with \citet{mcfadden1974conditional} and extended to differentiated products markets by \citet*{berry1995automobile}, are among the most widely used tools in empirical industrial organization, marketing, transportation, and public policy.
	They underpin merger simulations, antitrust analysis, and welfare evaluations of product introductions and regulatory interventions, settings where the distinction between normalization-free and normalization-dependent counterfactuals has direct policy consequences.
	
	To illustrate, consider a random-coefficient discrete choice model of demand, where consumer $i$'s utility from product $j$ in market $t$ is $u_{ijt} = \delta_{jt} + \nu_{ijt} + \varepsilon_{ijt}$, with mean utility $\delta_{jt} = x_{jt}'\boldsymbol{\beta}_0 - \alpha_0 p_{jt} + \xi_{jt}$, preference heterogeneity $\nu_{ijt}$, and idiosyncratic shock $\varepsilon_{ijt}$.
	Consumer $i$ chooses $\arg\max_j u_{ijt}$, where  $\varepsilon_{ijt} \sim$ Type-I extreme value distribution.
	For any $a \in \R$ and $b > 0$, the positive affine transformation $\tilde{u}_{ijt} = a + bu_{ijt}$ applied to \emph{all} alternatives (including the outside option) preserves the argmax.
	In the model's decomposition this corresponds to $\tilde\delta_{jt} = a + b\delta_{jt}$ (for all $j$, so the outside option's mean utility becomes $\tilde\delta_{0t} = a$ rather than $0$), $\tilde\nu_{ijt} = b\nu_{ijt}$, and $\tilde\varepsilon_{ijt} = b\varepsilon_{ijt}$.
	The standard normalization $\delta_{0t} = 0$ absorbs the location degree of freedom~$a$, and $\varepsilon_{ijt} \sim$ TIEV absorbs the scale~$b$.
	
	As expected, market shares $s_{jt}$, own-price elasticities, and diversion ratios are all functions of choice probabilities and hence normalization-free.
	
	The consumer surplus case is more nuanced and illustrates the framework's value.
	To obtain a closed form, consider the logit special case (no random coefficients: $\nu_{ijt} = 0$).
	Under logit with scale parameter $\mu$ (the standard normalization sets $\mu = 1$), the expected consumer surplus is
	$CS = (\mu/\alpha_0)\log\bigl(\sum_{j=0}^{J} \exp(\delta_{jt}/\mu)\bigr) + C$,
	where the sum runs over all alternatives including the outside option ($\delta_{0t} = 0$), $\alpha_0$ is the marginal utility of income, and $C$ is an unknown constant.
	The CS \emph{level} is normalization-dependent: the constant $C$ absorbs the location shift $a$, and the scale $\mu$ enters the formula.
	However, the money-metric CS \emph{change} from a policy that moves mean utilities from $\delta$ to $\delta'$ is
	\begin{equation}\label{eq:blp_cs}
		\Delta CS = \frac{\mu}{\alpha_0}\Bigl[\log\sum_{j=0}^{J} \exp(\delta'_{jt}/\mu) - \log\sum_{j=0}^{J} \exp(\delta_{jt}/\mu)\Bigr],
	\end{equation}
	in which the unknown constant $C$ cancels.
	Under the transformation $\tilde\mu = b\mu$, $\tilde\alpha_0 = b\alpha_0$, and $\tilde\delta_{jt} = a + b\delta_{jt}$, one can verify that $\widetilde{\Delta CS} = \Delta CS$: the location $a$ cancels in the difference of log-sums, and the scale $b$ cancels between the numerator $\tilde\mu = b\mu$ and the denominator $\tilde\alpha_0 = b\alpha_0$.
	Money-metric CS changes in the standard quasilinear logit model are therefore normalization-free.
	This cancellation relies on the closed-form log-sum structure of logit; in the full random-coefficients model, where $\Delta CS$ involves integrals over the heterogeneity distribution $\nu_{ijt}$, the scale parameter $b$ enters the integrand nonlinearly and the cancellation is no longer guaranteed without additional structure (e.g., conditions ensuring that scale enters the welfare integral only through the price coefficient).
	
	Percentage welfare changes, however, are again \emph{not} normalization-free.
	Since $\Delta CS / CS$ involves the CS level in the denominator, which retains the arbitrary additive constant from the location normalization, the percentage change can again suffer from the same pathology of the temperature-change example in the introduction: the level depends on the zero point, and any ratio or percentage involving that level inherits the arbitrariness.
	
	The framework thus draws a sharp line within the standard demand toolkit: market shares, elasticities, diversion ratios, and money-metric welfare changes are normalization-free; utility levels and percentage welfare changes are not. 
	
	This distinction is consequential for applied work and may be non-obvious at first. For adaptations of the standard multinomial choice or BLP model to specialized contexts, where additional model ingredients, structures, and assumptions are introduced, the boundary between normalization-free and normalization-dependent quantities can shift in subtle and complicated ways, making explicit verification through the equivalence-class criterion all the more important.
	
	\subsection{Dyadic Network Formation Models}
	\label{subsec:network}
	
	Dyadic network formation models have been important tools for the empirical analysis of social and economic networks. The interplay between individual heterogeneity, homophily, and unobserved shocks in the network setting also creates a richer equivalence class structure than in the binary or multinomial choice settings, making the normalization question both more consequential and more subtle.
	
	Consider the following dyadic network formation model \`a la \citet{graham2017econometric}, more specifically in the semiparametric form as considered in \citet{gao2020nonparametric}:
	\begin{equation}\label{eq:network}
		D_{ij} = \ind\bigl\{w(X_i, X_j) + A_i + A_j \ge U_{ij}\bigr\}, \qquad U_{ij} \sim_{\text{iid}} F,
	\end{equation}
	where $w(\cdot)$ is a (possibly nonparametric) homophily effect, $A_i$ are individual fixed effects, and $F$ is an unknown CDF.
	For any $a, b \in \R$ and $c > 0$, the positive affine transformation
	\begin{equation}\label{eq:network_trans}
		\hat{A}_i = cA_i + a, \quad \hat{w}(x_i, x_j) = cw(x_i, x_j) + b, \quad \hat{U}_{ij} = cU_{ij} + 2a + b
	\end{equation}
	preserves the link indicator and all maintained assumptions (symmetry of $w$, the i.i.d.\ structure of $U_{ij}$, independence).
	The equivalence class has \emph{three} degrees of freedom: two location ($a$, $b$) and one scale ($c$). Here we have one more degree of freedom than in binary choice, because the homophily function $w$ and the fixed effects $A_i$ can absorb separate location shifts.
	
	Following \citet{gao2020nonparametric}, one can normalize by setting $w(\bar{x}, \bar{x}) = 0$ (absorbing $b$), $F^{-1}(\alpha) = 0$ (absorbing $a$), and $F^{-1}(1-\alpha) - F^{-1}(\alpha) = 1$ (absorbing $c$) for some $\alpha \in (0, 1/2)$.
	This \emph{two-quantile normalization}, fixing two quantiles of the unobserved distribution rather than its mean and variance, has notable advantages over the standard alternatives.
	Variance normalization ($\text{Var}(U_{ij}) = 1$) requires finite variance, excluding heavy-tailed distributions such as Cauchy; mean-zero normalization ($\E[U_{ij}] = 0$) requires finite first moments.
	By contrast, the two-quantile normalization requires only that $F$ be continuous and strictly increasing on the relevant range, a condition already maintained in the model, and is therefore applicable under minimal regularity.
	This illustrates a broader principle: the choice among WLOG-equivalent normalizations, while irrelevant for identification, can matter for the regularity conditions available to the subsequent estimation and inference program.
	
	Linking probabilities $\PP(D_{ij} = 1 \mid X_i, X_j, A_i, A_j)$ are normalization-free, as is the shape of $w$ (the equivalence class preserves $w$ up to $cw + b$, so curvature and relative patterns survive) and the ranking of fixed effects across individuals (since $\hat{A}_i = cA_i + a$ preserves orderings).
	The absolute levels of $A_i$ and $w(x_i, x_j)$, by contrast, depend on $a$, $b$, and $c$, and similarly percentage changes in $w(x_i, x_j)$ are normalization-dependent.
	
	A useful practice, adopted in \citet{gao2020nonparametric}, is to carry out identification arguments with the normalization constants left as free unknowns and present results as equivalence classes under positive affine transformations.
	This ``gives back'' the normalization and makes transparent which properties of $(w, A, F)$ are genuinely identified and which are artifacts of coordinate choice.
	
	The analysis in \citet{gao2020nonparametric} also illustrates a subtle but important point: while normalization is ``useless'' for identification in the sense of Theorem~\ref{thm:wlog}  (i.e., it cannot create or destroy identification of any counterfactual), a judicious choice of normalization can make the identification \emph{argument} substantially easier.
	In the network setting, conditional linking probabilities are naturally expressed as CDF values, which are in turn linked to quantiles via the inverse transformation.
	The two-quantile normalization $F^{-1}(\alpha) = 0$ and $F^{-1}(1-\alpha) = 1$ provides two ``free starting points'' for the unknown CDF, as known anchor values from which the rest of $F$ can be constructed step by step. These anchors are later ``given back'' when the final result is stated as an equivalence class, but their temporary availability streamlines the constructive identification proof. This analytical convenience is fully consistent with the WLOG theorem: normalization changes no identification \emph{result}, but it can simplify the \emph{path} to that result.
	
	Two extensions of the network model considered in \cite{gao2020nonparametric} further illustrate how richer structure can reduce indeterminacy.
	First, under the parametric specification $w(x_i, x_j) = \tilde{w}(x_i, x_j)'\boldsymbol{\beta}_0$ with $\tilde{w}(x, x) := |x-x| = 0$ by construction, the first location normalization $w(\bar{x}, \bar{x}) = 0$ becomes automatic and no longer consumes a degree of freedom.
	Only the two normalizations on $F$ are needed, reducing the equivalence class from three to two degrees of freedom. Second, if the additive coupling $A_i + A_j$ is replaced by a known nonlinear function $\phi(A_i, A_j)$, the curvature of $\phi$ restricts the class of admissible transformations.
	The positive affine transformation $\hat{A}_i = cA_i + a$ induces $\phi(cA_i + a, cA_j + a)$, which generally differs from $c\phi(A_i, A_j) + \text{(affine terms)}$ unless $\phi$ is additively separable. 
	
	This illustrates the general principle that strengthening model structure shrinks the equivalence classes and makes more counterfactuals normalization-free. Of course, stronger structure comes at a cost: identification becomes more reliant on the model, and concerns about model misspecification become more salient. The framework clarifies this trade-off: richer assumptions buy smaller equivalence classes and more normalization-free counterfactuals, but the validity of those counterfactuals is only as good as the maintained assumptions that define the classes.

	%% ================================================================
	\section{Normalization, Singularity, and Model Modification}
	\label{sec:modification}
	%% ================================================================
	
	Theorem~\ref{thm:wlog} shows that genuine normalization, which selects one representative from each modeling equivalence class, is always WLOG for identification.
	In practice, however, operations that go beyond representative selection are sometimes performed under the same label.
	
	Section~\ref{subsec:augmentation} considers singularities in the target functional: extending a counterfactual beyond the domain where it is naturally well-defined, typically to ``patch'' a boundary singularity, changes the estimand, not just the coordinates.
	Section~\ref{sec:sphere} considers singularities in the normalized parameter space: the regularity conditions (topological, metric, and compactness assumptions) imposed for estimation and inference are stated on the normalized space, not on the equivalence classes.
	Because these conditions are properties of the normalization rather than the equivalence classes, different normalizations that are equally valid for identification can lead to different, and sometimes contradictory, asymptotic conclusions.
	
	The two cases are, in a sense, flip sides of the same singularity story: domain augmentation attempts to repair a pre-existing singularity in the target functional, while certain normalizations \emph{create} a new singularity in the parameter space.
	In both cases, the framework provides a clean diagnostic: the operation goes beyond genuine normalization whenever it fails to preserve the equivalence-class structure.
	
	\subsection{Singularity in the Target Functional}
	\label{subsec:augmentation}
	
	A \emph{boundary singularity} arises when a counterfactual $q_0$ is normalization-free on a ``regular'' subset of the parameter space but cannot be coherently extended to the boundary.
	When the target functional is only partially defined, well-defined on a regular domain but singular at the boundary, any attempt to extend it to the full space faces a tension among three desiderata: \emph{fidelity} (the extension agrees with the original on the regular domain), \emph{invariance} (the extension is $Q_\sim$-measurable on the full space), and \emph{regularity} (the extension is finite-valued and continuous at the boundary).
	The following result shows this tension is unavoidable whenever the quotient-space functional has no well-behaved limit at the boundary.
	
	\begin{thm}[Boundary Extension Trilemma]\label{thm:trilemma}
		Let $\Theta_0 \subsetneq \cU \times \Gamma$ be an open invariant subset (a union of modeling equivalence classes). Let $q_0 : \Theta_0 \to \R$ be continuous and $Q_\sim$-measurable, so that   $q_0 = \bar{q}_0 \circ \pi$ for a map $\bar{q}_0 : \pi(\Theta_0) \to \R$, where $\pi$ is the canonical projection in \eqref{eq:factorization}.
		If $\bar{q}_0$ does not admit a \textbf{finite continuous extension} to some $p^* \in \overline{\pi(\Theta_0)} \setminus \pi(\Theta_0)$, then \textbf{no} map $\tilde{q} : \cU \times \Gamma \to \R$ can simultaneously satisfy:
		\begin{enumerate}[label=(\alph*), nosep]
			\item \emph{Fidelity}: $\tilde{q}\big|_{\Theta_0} = q_0$.
			\item \emph{Invariance}: $\tilde{q}$ is $Q_\sim$-measurable on $\cU \times \Gamma$.
			\item \emph{Regularity}: the induced map on the quotient $\bar{\tilde{q}}$ is finite and continuous at $p^*$.
		\end{enumerate}
	\end{thm}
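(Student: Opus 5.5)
The plan is a proof by contradiction. Suppose some $\tilde{q} : \cU \times \Gamma \to \R$ satisfies (a), (b), and (c) at once. By invariance (b), $\tilde{q}$ is constant on every modeling equivalence class. Hence it factors through the canonical projection exactly as in \eqref{eq:factorization}: there is a map $\bar{\tilde{q}} : Q_\sim \to \R$ with $\tilde{q} = \bar{\tilde{q}} \circ \pi$, defined on all of $Q_\sim$. This map is well defined because $\pi$ is surjective and $\tilde{q}$ is constant on its fibers. It is also the object that condition (c) refers to.

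The next step compares $\bar{\tilde{q}}$ with $\bar{q}_0$ on $\pi(\Theta_0)$. Take any class $p \in \pi(\Theta_0)$ and pick a representative $(\mathbf{U},\gamma) \in \Theta_0$ with $\pi(\mathbf{U},\gamma) = p$. Fidelity (a) gives $\bar{\tilde{q}}(p) = \tilde{q}(\mathbf{U},\gamma) = q_0(\mathbf{U},\gamma) = \bar{q}_0(p)$. So $\bar{\tilde{q}}\big|_{\pi(\Theta_0)} = \bar{q}_0$.

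Invariance of $\Theta_0$ is what makes this step clean. Because $\Theta_0$ is a union of classes, $\pi^{-1}(\pi(\Theta_0)) = \Theta_0$, so every preimage of a point of $\pi(\Theta_0)$ lies in $\Theta_0$ and no class is only partially covered. The same fact shows that $\pi(\Theta_0)$ is open in the quotient topology, since its preimage $\Theta_0$ is open. The continuity of $q_0$ and the openness of $\Theta_0$ are then just part of the setup guaranteeing that $\bar{q}_0$ is a sensible continuous object on an open subset of $Q_\sim$.

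Now use regularity (c): $\bar{\tilde{q}}$ is finite and continuous at $p^* \in \overline{\pi(\Theta_0)} \setminus \pi(\Theta_0)$. Restrict $\bar{\tilde{q}}$ to $\pi(\Theta_0) \cup \{p^*\}$. On $\pi(\Theta_0)$ it equals $\bar{q}_0$, and at $p^*$ it takes the finite value $\bar{\tilde{q}}(p^*)$. A restriction of a function continuous at $p^*$ is still continuous at $p^*$ relative to the subspace. Hence this restriction is a finite extension of $\bar{q}_0$ to $p^*$ that is continuous at $p^*$. This contradicts the hypothesis that $\bar{q}_0$ admits no finite continuous extension to $p^*$.

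The main obstacle is making precise what ``finite continuous extension to $p^*$'' means, so that it matches exactly what (c) delivers. I would fix the reading as: a value $v \in \R$ such that $\bar{q}_0$ extended by $\bar{q}_0(p^*) := v$ is continuous at $p^*$ on $\pi(\Theta_0)\cup\{p^*\}$ with the subspace quotient topology. Equivalently, $\bar{q}_0(p) \to v$ as $p \to p^*$ within $\pi(\Theta_0)$, along nets if $Q_\sim$ is not first countable.

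With that reading, the argument needs only one more check. Because $p^* \in \overline{\pi(\Theta_0)}$, every neighborhood of $p^*$ meets $\pi(\Theta_0)$, so the relative continuity statement is non-vacuous and equivalent to the limit statement. Continuity of $\bar{\tilde{q}}$ on all of $Q_\sim$ at $p^*$ implies continuity of its restriction, so the contradiction goes through without further topological assumptions on $Q_\sim$, such as Hausdorffness.
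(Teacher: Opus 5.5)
Your proposal is correct and matches the paper's proof step for step. Invariance (b) factors $\tilde{q}$ through $\pi$, fidelity (a) forces $\bar{\tilde{q}} = \bar{q}_0$ on $\pi(\Theta_0)$, and regularity (c) then makes $\bar{\tilde{q}}$, restricted to $\pi(\Theta_0)\cup\{p^*\}$, a finite extension continuous at $p^*$, contradicting the hypothesis. Your added care in fixing what ``finite continuous extension'' means, and in checking that continuity passes to the subspace, clarifies a point the paper leaves implicit but does not change the route.
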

	
	\begin{proof}
		If $\tilde{q}$ satisfies~(b), it factors as $\tilde{q} = \bar{\tilde{q}} \circ \pi$ for some $\bar{\tilde{q}} : Q_\sim \to \R$.
		By~(a), $\bar{\tilde{q}} = \bar{q}_0$ on $\pi(\Theta_0)$.
		By~(c), $\bar{\tilde{q}}$ is finite and continuous at $p^*$.
		Hence $\bar{\tilde{q}}$ is a finite continuous extension of $\bar{q}_0$ to $p^*$, contradicting the hypothesis.
	\end{proof}
	
	A practical sufficient condition for the hypothesis is that there exist sequences $p_n, r_n \in \pi(\Theta_0)$ with $p_n \to p^*$ and $r_n \to p^*$ but $\lim \bar{q}_0(p_n) \neq \lim \bar{q}_0(r_n)$ (including the case where one limit diverges), which we call the \emph{non-unique limit test} on the quotient. One response to the trilemma is to shrink the domain to exclude $p^*$, sacrificing generality rather than any of the three properties.
	Augmentation takes the opposite approach: it extends the functional to the boundary, but typically sacrifices invariance~(b) and may also sacrifice exact fidelity~(a), producing a well-defined number for a different, normalization-dependent object.
	Proposition~\ref{prop:fixed_point} below identifies one algebraic mechanism behind the singularity: a fixed point of the transformation family that generates the non-extendability condition.
	\medskip
	
	The ``logs with zeros'' problem studied by \citet{chen2024logs} provides a vivid illustration.
	The log transformation $\log(Y)$, commonly used to express percentage effects, is well-defined for strictly positive $Y$ and yields a scale-invariant ATE (since $\log(aY_1) - \log(aY_0) = \log Y_1 - \log Y_0$), but has a singularity at $Y = 0$.
	Any extension, such as choosing $c$ in $\log(c+Y)$, or switching to $\operatorname{arcsinh}(Y)$, attempts to repair this singularity, but imports a new functional that does not coincide with $\log Y$ on the positive reals and whose induced ATE is not scale-invariant.
	The following result provides a complementary viewpoint of the algebraic mechanism behind the singularity.
	
	\begin{prop}[Fixed-Point Singularity]\label{prop:fixed_point}
		Let $\mathcal{G}$ be a collection of transformations on a space $\mathcal{Y}$, let $\rho : \mathcal{G} \to \R$ assign a real number to each transformation, and let $m : \mathcal{Y} \setminus \{p\} \to \R$ satisfy
		\begin{equation}\label{eq:equivariance}
			m(g \cdot y) = \rho(g) + m(y), \qquad \forall\, g \in \mathcal{G},\ y \in \mathcal{Y} \setminus \{p\}.
		\end{equation}
		Suppose $\rho(g) \neq 0$ for some $g \in \mathcal{G}$, and that $p \in \mathcal{Y}$ is a fixed point in the sense of $g \cdot p = p$ for all $g \in \mathcal{G}$.
		Then:
		\begin{enumerate}[label=(\roman*), nosep]
			\item $m$ cannot be extended to $p$: no value $m(p) \in \R$ is consistent with \eqref{eq:equivariance}.
			\item For any extension $\tilde{m} : \mathcal{Y} \to \R$ that agrees with $m$ on $\mathcal{Y} \setminus \{p\}$, the induced difference $\tilde{q}(y_1, y_0) := \tilde{m}(y_1) - \tilde{m}(y_0)$ is not invariant under $\mathcal{G}$ whenever one argument equals $p$.
		\end{enumerate}
	\end{prop}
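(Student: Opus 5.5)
Both parts follow from evaluating the equivariance identity \eqref{eq:equivariance} at the fixed point $p$. The only modeling choice is what ``consistent with \eqref{eq:equivariance}'' means for a value $m(p)$: I read it as requiring the extended map to satisfy \eqref{eq:equivariance} at $y = p$ as well, i.e.\ $m(g\cdot p) = \rho(g) + m(p)$ for all $g \in \mathcal{G}$.

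For part (i), I argue by contradiction. Suppose a real value $m(p)$ existed making \eqref{eq:equivariance} hold at $y = p$. Take the transformation $g$ with $\rho(g) \neq 0$, which the hypothesis provides. Since $p$ is fixed, $g \cdot p = p$, so \eqref{eq:equivariance} gives $m(p) = m(g\cdot p) = \rho(g) + m(p)$. Because $m(p)$ is a finite real number, we may subtract it to get $\rho(g) = 0$, a contradiction. Finiteness is used exactly here: with $m(p) = \pm\infty$ the identity could hold vacuously. This matches the ``finite'' qualifier in the regularity condition of Theorem~\ref{thm:trilemma}.

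For part (ii), fix an arbitrary extension $\tilde m$ and use the same $g$ with $\rho(g)\neq 0$. I read invariance of $\tilde q$ under $\mathcal{G}$ as the diagonal action: $\tilde q(g\cdot y_1, g\cdot y_0) = \tilde q(y_1,y_0)$ for all $g$. Take $y_0 = p$ and $y_1 \neq p$; I assume $\mathcal{G}$ preserves $\mathcal{Y}\setminus\{p\}$, which is implicit in \eqref{eq:equivariance} being stated on that set. Then $g\cdot y_1 \neq p$, so \eqref{eq:equivariance} and the fixed-point property give
\[
\tilde q(g\cdot y_1, g\cdot p) = m(g\cdot y_1) - \tilde m(p) = \rho(g) + m(y_1) - \tilde m(p) = \tilde q(y_1,p) + \rho(g) \neq \tilde q(y_1,p).
\]
The case $y_1 = p$, $y_0 \neq p$ is symmetric and yields a shift of $-\rho(g)$. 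By contrast, when neither argument equals $p$ the shifts cancel, since $\tilde q(g\cdot y_1, g\cdot y_0) = (\rho(g) + m(y_1)) - (\rho(g) + m(y_0)) = \tilde q(y_1, y_0)$. This is why the failure is confined to pairs involving $p$.

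The main obstacle is definitional rather than computational. One must pin down the meaning of ``consistent with'' in (i) and of ``invariant under $\mathcal{G}$'' in (ii), and one must check that $g$ maps $\mathcal{Y}\setminus\{p\}$ to itself; if it did not, $m(g\cdot y_1)$ could be undefined. If the action were not assumed to preserve this set, I would add that as a stated standing assumption. I would close by instantiating the result for the logs-with-zeros example: $\mathcal{G} = \{y\mapsto ay : a > 0\}$, $m = \log$, $\rho(a) = \log a$, and $p = 0$, the common fixed point of all rescalings.
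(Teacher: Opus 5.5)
Your proposal is correct and follows the paper's proof essentially line for line. Part (i) evaluates the equivariance identity at the fixed point using a $g$ with $\rho(g)\neq 0$, and part (ii) takes $y_0 = p$ to obtain $\tilde q(g\cdot y_1, g\cdot p) = \tilde q(y_1,p) + \rho(g)$. Your explicit requirements that $y_1 \neq p$ and that $\mathcal{G}$ preserve $\mathcal{Y}\setminus\{p\}$ are sound clarifications the paper leaves implicit: $\tilde q(p,p)=0$ is trivially invariant, so ``one argument'' must mean exactly one.
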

	
	\begin{proof}
		For~(i): if $m(p)$ were defined, then $m(p) = m(g \cdot p) = \rho(g) + m(p)$ for all $g \in \mathcal{G}$, giving $\rho(g) = 0$ for all $g$, contradicting the assumption that $\rho(g) \neq 0$ for some $g$.
		For~(ii): set $y_0 = p$ and let $g \in \mathcal{G}$ satisfy $\rho(g) \neq 0$.
		Then $\tilde{q}(g \cdot y_1, g \cdot p) = \tilde{q}(g \cdot y_1, p) = \tilde{m}(g \cdot y_1) - \tilde{m}(p) = [\rho(g) + \tilde{m}(y_1)] - \tilde{m}(p) = \rho(g) + \tilde{q}(y_1, p) \neq \tilde{q}(y_1, p)$.
	\end{proof}
	
	\noindent In the setting of \citet{chen2024logs}, $\mathcal{G}$ is the collection of scaling transformations $y \mapsto ay$ for $a > 0$ on $\mathcal{Y} = [0, \infty)$, the fixed point is $p = 0$, and $m = \log$ satisfies \eqref{eq:equivariance} on $(0,\infty)$ with $\rho(a) = \log a$.
	Proposition~\ref{prop:fixed_point}(i) says that no value can be assigned to $\log(0)$ consistently with scale equivariance.
	Part~(ii) says that even in the best case, where $\tilde{m}$ agrees exactly with $\log$ on $(0, \infty)$ and assigns an arbitrary finite value $\tilde{m}(0) = c$, the induced ATE $\E[\tilde{m}(Y_1) - \tilde{m}(Y_0)]$ is not scale-invariant whenever the treatment moves some individuals onto or off the boundary $Y = 0$. The commonly used ``log-like'' transformations such as $\log(1 + y)$ and $\operatorname{arcsinh}(y)$ do not even satisfy this best case: they differ from $\log y$ on the entire positive real line, so they sacrifice fidelity in addition to invariance. This is the further critique made in \citet{chen2024logs}, which our Proposition \ref{prop:fixed_point} does not directly capture. 
	
	That said, our extension trilemma (Theorem~\ref{thm:trilemma}) approaches the issue from a complementary perspective based on our broader framework. 
	In terms of Theorem~\ref{thm:trilemma}, the quotient functional $\bar{q}_0$ (the ATE restricted to strictly positive outcomes) is well-defined on the interior of the quotient but does not extend continuously to the boundary class where $P(Y = 0) > 0$: approximating the atom at $0$ by mass at $\delta > 0$ contributes $P(Y_0 = 0) \cdot \log(a\delta)$ to the ATE of the $a$-scaled element, which shifts by $P(Y_0 = 0) \cdot \log a$ across elements of the same boundary class, so the non-unique limit test is satisfied.
	Any completion therefore achieves regularity~(c) at the boundary but necessarily sacrifices fidelity~(a), invariance~(b), or both.
	
	The structural models of Section~\ref{sec:examples} further illustrate the broader relevance of Theorem \ref{thm:trilemma} beyond Proposition \ref{prop:fixed_point} and fixed-point singularities.  Percentage welfare changes are normalization-dependent through a different mechanism: an \emph{interior} failure rather than a boundary failure.
	The positive affine family $y \mapsto a + by$ has no fixed point, so Proposition~\ref{prop:fixed_point} does not apply.
	Instead, the location shift~$a$ enters the utility \emph{level}, making the denominator of any percentage normalization-dependent. The percentage is not $Q_\sim$-measurable even on its natural domain: it varies within equivalence classes, so Lemma~\ref{lem:limit} already implies it cannot be point-identified.
	The two settings thus illustrate complementary challenges for identification and normalization: Theorem~\ref{thm:trilemma} governs the \emph{boundary} case (a normalization-free functional that resists extension), while Lemma~\ref{lem:limit} governs the \emph{interior} case.
	
	The takeaway from this subsection is that if a purported ``normalization'' is needed to make the target functional well-defined, the problem lies in the definition of the target, or in the model, not in the parameterization.
	This warning is not merely abstract.
	In dynamic latent-factor models of skill formation, \citet{agostinelli2025estimating} show that period-by-period re-normalization can be over-identifying and can bias estimation when the production technology already has known location and scale, while \citet{freyberger2025normalizations} shows more broadly that seemingly innocuous scale and location restrictions can constrain identified parameters and alter counterfactuals in a model-dependent way.
	In the language of the present paper, these are precisely cases in which an operation marketed as a normalization is actually doing substantive modeling work.
	%% ================================================================
	\subsection{Singularity in the Normalized Parameter Space}\label{sec:sphere}
	%% ================================================================
	
	Domain augmentation confronts a pre-existing singularity in the \emph{target functional}. In addition, there are cases where the singularity is not inherited but \emph{introduced} by the ``normalization'', for example, dividing by a designated coefficient creates a coordinate singularity, distorting the geometry of the parameter space. The induced singularity may interact with further model modification in the form of regularity conditions (topological, metric, smoothness) when they are imposed for estimation and inference purposes. These conditions are substantive restrictions typically stated on the normalized parameter space rather than on the equivalence classes, and they need not be preserved across different normalizations of the same classes.
	Two normalizations that are both valid for identification purposes can therefore induce sharply different parameter-space geometries, with direct consequences for the validity of standard asymptotic arguments.
	This point is closely related to \citet*{hamilton2007normalization}, who show in likelihood-based models that poor normalizations can create multimodal posteriors, disjoint confidence regions, and misleading descriptions of statistical uncertainty.
	Our contribution here is complementary: even when two normalizations are equally WLOG for identification, a coordinate chart can still distort the topology and metric imposed downstream on the quotient space.
	
	We illustrate this using the semiparametric binary choice model \`a la \cite{manski1975}. Specifically, consider
	$$Y_i = \ind\{X_i'\boldsymbol{\beta}_0 \ge \varepsilon_i\},\quad \boldsymbol{\beta}_0 \in \R^D \setminus \{0\},$$
	which has scale indeterminacy: each equivalence class is a ray $\{c\boldsymbol{\beta}_0 : c > 0\}$ in $\R^D \setminus \{0\}$, and any normalization that selects one representative per ray is WLOG for identification by Theorem~\ref{thm:wlog}. In the econometric literature, there are two commonly used approaches for scale normalization: the \emph{coordinate normalization} and the \emph{sphere normalization}, which we describe below.
	
	\subsubsection{Special-Coordinate versus Sphere Normalization}
	
	The most commonly adopted approach in the semiparametric binary choice literature, or what we call \emph{special-coordinate normalization} here, normalizes the coefficient on a ``special regressor'': assume $\beta_{0,1} \neq 0$ in addition and set $|\beta_{0,1}| = 1$, so that the normalized parameter space becomes 
	\begin{equation}\label{eq:coord_norm}
		\tilde{\boldsymbol{\beta}}_0 := \Bigl(\operatorname{sgn}(\beta_{0,1}),\; \frac{\beta_{0,2}}{|\beta_{0,1}|},\; \ldots,\; \frac{\beta_{0,D}}{|\beta_{0,1}|}\Bigr)' \in \tilde{B} := \{1, -1\} \times \R^{D-1}.
	\end{equation}
	The normalized parameter space $\tilde{B}$ consists of two copies of $\R^{D-1}$ (one for each sign of $\beta_{0,1}$), and the natural metric is the Euclidean norm on $\R^{D-1}$ applied to the remaining coordinates, with the sign handled separately.
	It is better understood as a pair of branchwise affine charts on the open set $\{\beta_{0,1} \neq 0\}$ than as a globally innocuous normalization of the quotient space. If the sign of $\beta_{0,1}$ is separately identified, one may work on a single branch such as $\beta_{0,1}=1$ or $\beta_{0,1}=-1$; if not, one can keep both branches through $|\beta_{0,1}|=1$. Either way, the deeper issue is unchanged: the chart becomes singular as $\beta_{0,1} \to 0$.
	This approach has been widely adopted since \citet{horowitz1992} in the literature on semiparametric binary and discrete choice models \citep[see also, e.g., ][]{lewbel2000semiparametric}, because it effectively reduces the parameter space to $\R^{D-1}$, on which standard asymptotic theory applies without the need to deal with any spherical geometry.
	
	An alternative normalization approach, which was used in \cite{manski1975,manski1985} and is referred to as \emph{sphere normalization} here, is to project the parameter space onto the unit sphere:
	\[
	\bar{\boldsymbol{\beta}}_0 := \frac{\boldsymbol{\beta}_0}{\norm{\boldsymbol{\beta}_0}} \in \mathbb{S}^{D-1} := \{v \in \R^D : \norm{v} = 1\}.
	\]
	This normalization was adopted in more recent work by \citet{gaoli2026} in a panel multinomial choice setting.
	The relevant normalized parameter space is the sphere itself, equipped with the \emph{great-circle (angular) metric}:
	\begin{equation}\label{eq:gc_metric}
		\rho_{GC}(\bar{\boldsymbol{\beta}}, \underline{\boldsymbol{\beta}}) := \arccos\bigl(\bar{\boldsymbol{\beta}}'\underline{\boldsymbol{\beta}}\bigr) \in [0, \pi],
	\end{equation}
	which measures the angle between two unit vectors and reflects the intrinsic geometry of the quotient space of rays.
	The great-circle metric and the Euclidean metric on $\mathbb{S}^{D-1}$ are \emph{strongly equivalent}: for all $\bar{\boldsymbol{\beta}}, \underline{\boldsymbol{\beta}} \in \mathbb{S}^{D-1}$,
	\begin{equation}\label{eq:strong_equiv}
		\norm{\bar{\boldsymbol{\beta}} - \underline{\boldsymbol{\beta}}} \le \rho_{GC}(\bar{\boldsymbol{\beta}}, \underline{\boldsymbol{\beta}}) \le \frac{\pi}{2}\norm{\bar{\boldsymbol{\beta}} - \underline{\boldsymbol{\beta}}},
	\end{equation}
	so the two metrics generate the same topology, the same notions of convergence, and the same uniform continuity.
	
	\subsubsection{Implications for Estimation and Inference}
	
	Both normalizations are WLOG for identification under their stated assumptions, but some of their consequences diverge, especially once one imposes additional regularity conditions needed for estimation and inference. Four differences are particularly consequential.
	
	\paragraph{Restrictiveness.}
	The sphere $\mathbb{S}^{D-1}$ represents all directions in $\R^D \setminus \{0\}$, while the special-coordinate normalization excludes every $\boldsymbol{\beta}_0$ with $\beta_{0,1} = 0$. Even when $\beta_{0,1} \neq 0$ is assumed, the chart becomes singular as $\beta_{0,1} \to 0$: the excluded hyperplane is a set of measure zero, but its influence extends to an entire neighborhood through the divergence of the normalized coordinates, as illustrated below.
	
	\paragraph{Topological Structures.}
	The two normalizations induce fundamentally different topologies on their parameter spaces.
	The coordinate-normalized space $\tilde{B} = \{1,-1\} \times \R^{D-1}$ is \emph{disconnected} (the two sign components are disjoint) and \emph{unbounded} (the coordinates $\beta_{0,k}/|\beta_{0,1}|$ diverge as the true parameter approaches the excluded hyperplane $\beta_{0,1} = 0$).
	Since compactness requires boundedness, the coordinate normalization destroys it.
	To restore it, as typically required for consistency of extremum estimators, one imposes an ad hoc compactness restriction  $\tilde{\boldsymbol{\beta}}_0 \in \{1, -1\} \times [-M, M]^{D-1}$.
	However, this excludes not only directions with $\beta_{0,1} = 0$ but an entire neighborhood with $|\beta_{0,1}| < 1/M$, a substantive restriction masquerading as a regularity condition.
	By contrast, the sphere $(\mathbb{S}^{D-1}, \rho_{GC})$ is \emph{connected} (for $D \ge 2$), \emph{bounded}, and \emph{closed}, and hence compact by construction, requiring no auxiliary trimming or boundedness assumptions.
	
	\paragraph{Convergence Properties.}
	Most importantly, the coordinate singularity at $\beta_{0,1} = 0$ can reverse convergence verdicts. We give two illustrations with $D = 2$.
	
	\emph{Across sign components.}
	Consider $\tilde{\boldsymbol{\beta}}^{(M)} = (1, M)'$ and $\tilde{\boldsymbol{\gamma}}^{(M)} = (-1, M)'$ in $\tilde{B}$, approaching the excluded hyperplane $\beta_{0,1} = 0$ from opposite sign components.
	Because $\tilde{\boldsymbol{\beta}}^{(M)}$ and $\tilde{\boldsymbol{\gamma}}^{(M)}$ lie in different connected components of $\tilde{B} = \{1,-1\} \times \R^{D-1}$, the chart topology treats them as permanently separated: no sequence in one component can converge to a point in the other.
	Yet both represent rays that approach the same direction $(0,1)'$ as $M \to \infty$.
	On the sphere, let $\bar{\boldsymbol{\beta}}^{(M)} := \tilde{\boldsymbol{\beta}}^{(M)}/\norm{\tilde{\boldsymbol{\beta}}^{(M)}}$ and $\underline{\boldsymbol{\beta}}^{(M)} := \tilde{\boldsymbol{\gamma}}^{(M)}/\norm{\tilde{\boldsymbol{\gamma}}^{(M)}}$.
	Then $\rho_{GC}(\bar{\boldsymbol{\beta}}^{(M)}, \underline{\boldsymbol{\beta}}^{(M)}) \to 0$ as $M \to \infty$: the two directions converge on the sphere, but the disconnectedness of the chart makes this invisible.
	
	\emph{Within a sign component.}
	Now consider $\tilde{\boldsymbol{\beta}}^{(M)} = (1, M)'$ and $\tilde{\boldsymbol{\gamma}}^{(M)} = (1, 2M)'$, both in the $\{1\} \times \R$ component.
	Their coordinate distance is $\rho_{\mathrm{Euc}}(\tilde{\boldsymbol{\beta}}^{(M)}, \tilde{\boldsymbol{\gamma}}^{(M)}) = M \to \infty$, suggesting divergence.
	Yet both rays converge to $(0,1)'$ on the sphere, and the great-circle distance between their projections tends to zero: the chart inflates distances near the singularity without bound.
	
	In both cases, what is nonconvergent under the special-coordinate normalization is convergent under the sphere normalization.
	Any asymptotic result formulated in $(\tilde{B}, \rho_{\mathrm{Euc}})$ carries an implicit dependence on the coordinate chosen for normalization, a form of nonuniformity that the sphere normalization avoids entirely.
	
	\paragraph{Metric Structures.}
	The metric distortion documented above has direct consequences for inference. Standard errors and confidence intervals inherit the Euclidean metric of the normalized coordinate space. Under the special-coordinate normalization, a 95\% confidence interval for a parameter is an interval in $\R^{D-1}$. However, the Euclidean metric on $\tilde{B}$ is not the intrinsic metric on the space of equivalence classes (rays from the origin). A given length in $\tilde{B} = \{1,-1\} \times \R^{D-1}$ can correspond to a large arc near the poles ($|\tilde{\boldsymbol{\beta}}_{0,-1}|$ small) but a vanishingly small arc near the equator ($|\tilde{\boldsymbol{\beta}}_{0,-1}| \to \infty$).
	Consequently, the magnitude of the standard error and the ``width'' of a confidence interval reported in coordinate-normalized units are chart-dependent and not uniformly comparable to the intrinsic quotient geometry, especially near the singularity. By contrast, on the sphere $(\mathbb{S}^{D-1}, \rho_{GC})$, the strong metric equivalence~\eqref{eq:strong_equiv} guarantees that Euclidean and great-circle distances are uniformly comparable, so standard errors retain their geometric meaning across the entire parameter space.
	
	\subsubsection{Summary}
	The binary choice example illustrates a more general lesson: a normalization that introduces a coordinate singularity can have consequences for estimation and inference that go far beyond the choice of representative, because the regularity conditions that underpin asymptotic theory (topological structure, metric structure, uniformity, and smoothness) are imposed on the \emph{normalized parameter space}, not on the equivalence classes themselves.
	
	This is not merely a theoretical curiosity. In any model with scale, location-scale, or more complex indeterminacy, the choice of normalization determines the topology in which estimators converge, the compactness conditions available to the analyst, and the uniformity properties of the resulting inference.
	Together with Section~\ref{subsec:augmentation}, the two subsections illustrate the two faces of the singularity problem that normalization can encounter. Domain augmentation attempts to repair a pre-existing singularity: $\log(0)$ is undefined, and any completion that extends the functional to the boundary necessarily sacrifices invariance or fidelity. The coordinate normalization, by contrast, \emph{creates} a singularity: dividing by $\beta_{0,1}$ maps the excluded hyperplane $\{\beta_{0,1} = 0\}$ to infinity, so that the normalized coordinates $\beta_{0,k}/|\beta_{0,1}|$ diverge as $\beta_{0,1} \to 0$, distorting the topology and metric of the parameter space.
	The framework of this paper, and the concept of end-to-end WLOG normalization introduced in Section~\ref{subsec:end_to_end}, provide the language for diagnosing when a normalization that is WLOG for identification may nonetheless fail to be WLOG for the full statistical analysis.
	
	%% ================================================================
	\section{Discussion}
	\label{sec:discussion}
	%% ================================================================
	
	\label{subsec:end_to_end}
	
	The preceding sections show that singularities can arise in two complementary ways: a pre-existing singularity in the target functional may resist extension to the boundary (Section~\ref{subsec:augmentation}), or a normalization may introduce a coordinate singularity that distorts the geometry of the parameter space (Section~\ref{sec:sphere}).
	
	This motivates what we call \emph{end-to-end WLOG normalization}: assessing normalization not in isolation at the identification stage, but jointly with the target counterfactual, the regularity conditions, and any auxiliary structure used downstream. A normalization is end-to-end WLOG only if every downstream object, whether a counterfactual parameter or a regularity condition for estimation and inference, is coherent with respect to the relevant quotient space. When verifying this for every downstream object is impractical, the researcher should at minimum select one specific object of primary interest and check whether the chosen normalization is end-to-end WLOG with respect to that object.
	
	Seen through this lens, the papers discussed in the literature review line up naturally.
	\citet*{hamilton2007normalization} emphasize the downstream inferential geometry of normalization, while \citet{agostinelli2025estimating} and \citet{freyberger2025normalizations} emphasize that some restrictions advertised as normalizations are in fact over-identifying or misspecifying assumptions once combined with richer model structure.
	The present framework unifies these insights: a proper normalization is a representative choice on a fixed equivalence class, whereas an operation that changes the target counterfactual, shrinks the class, or distorts the downstream geometry belongs to modeling, not normalization.
	
	The results extend naturally to partial identification: the identified set is a union of equivalence classes, and the WLOG Theorem applies equally.
	More broadly, richer model structure shrinks the equivalence classes and makes more counterfactuals normalization-free, a useful design principle for model specification.

	To reiterate the scope of the paper: we do not prescribe which models or maintained assumptions a researcher should adopt. A researcher working on a specific structural model is free to make whatever assumptions the application demands, and must justify those assumptions to their own audience. However, once the substantive assumptions are in place, the normalization issue should be explicitly addressed: what equivalence classes does the model induce, and how does each target counterfactual---as well as any regularity condition imposed for estimation and inference---interact with those classes? Making this accounting explicit is the paper's central recommendation.

	Concretely, the framework suggests the following checklist:
	\begin{enumerate}[nosep]
		\item What are the modeling equivalent transformations, and how many degrees of freedom does each equivalence class have?
		\item Is each target counterfactual constant on the relevant equivalence classes?
		\item If not, is the number interpreted explicitly as normalization-relative?
	\end{enumerate}
	Three practical recommendations follow.
	First, report the equivalence class: if the main result is ``point identification under normalization,'' the reader should be told what the unnormalized class looks like.
	Second, separate identification from parameterization: the choice of normalization may be irrelevant for economic content yet matter for estimation, inference, and computation (Section~\ref{sec:sphere}).
	Third, favor normalization-free counterfactuals: policy questions stated in terms of choice probabilities, entry probabilities, or network statistics are more likely to be invariant and hence structurally identified.
	This last recommendation aligns with the findings of \citet{freyberger2025normalizations}.
	%% ================================================================
	\section{Conclusion}
	\label{sec:conclusion}
	%% ================================================================
	
	We have presented a general framework for analyzing the interplay between identification, normalization, and counterfactual parameters in econometric models. The framework is deliberately model-agnostic: it does not evaluate the merits of particular maintained assumptions, but rather provides tools that any researcher can apply, conditional on their chosen model, to diagnose whether a normalization is genuinely without loss and whether it quietly changes the object of interest. The concrete illustrations across binary choice, demand estimation, and network formation demonstrate how this diagnostic works in practice.
	
	The singularity perspective developed in Section~\ref{sec:modification} suggests that the consequences of normalization extend well beyond identification. A normalization that is innocuous for point identification may nonetheless introduce coordinate singularities that distort the topology and metric of the parameter space, with implications for the validity of regularity conditions, consistency of estimators, and uniformity of inference. A systematic study of which normalizations are ``safe'' for downstream statistical analysis, and whether singularity-free alternatives exist in broader classes of models, is an important direction for future work.

	%% ================================================================
	\bibliographystyle{ecta}
	\bibliography{IDNorm}
	%% ================================================================
	
\end{document}